\documentclass[runningheads]{llncs}
\usepackage[T1]{fontenc}
\usepackage{graphicx}
\usepackage{tikz}
\usepackage{mathtools}
\usepackage{xcolor}

\definecolor{Yellow}{RGB}{255, 241, 0}

\definecolor{colorCrossWE}{RGB}{21, 142, 179}
\definecolor{colorCrossNS}{RGB}{230, 115, 0}

\definecolor{colorBgCrossWE}{RGB}{27, 182, 230}
\definecolor{colorBgCrossNS}{RGB}{255, 128, 0}

\usetikzlibrary{matrix,positioning,arrows,arrows.meta,fit,bending,shapes.misc}

\tikzset{
  ain/.style={<-,thick},
  aout/.style={->,thick},
  celltypedA/.style={
    fill=colorBgCrossWE
  },
  celltypedB/.style={
    fill=colorBgCrossNS,
    font=\bfseries
  },
  neigArr/.style={
    -{Stealth}_,
    thick
  },
  firingLabel/.style={
    thick,
    red,
    font=\bfseries
  },
  farr/.style={
    -{Stealth}_,
    thick
  },
  farrA/.style={
    farr,colorCrossWE
  },
  farrB/.style={
    farr,colorCrossNS,densely dashed
  },
  crossArr/.style={
    farr,colorCrossNS
  },
  cellHighlight/.style={
    fill=lime,
    font=\bfseries
  },
  table/.style={
    matrix of nodes,
    row sep=-\pgflinewidth,
    column sep=-\pgflinewidth,
    nodes={rectangle,draw=black,align=center,minimum size=7mm},
    text depth=0.25ex,
    text height=2.5mm,
    nodes in empty cells
  },
}
\usepackage{amsmath}
\usepackage{amssymb}
\usepackage{stmaryrd}
\usepackage{xspace}

\graphicspath{{figures/}}

\def\nei{\mathcal{N}}
\def\prim{\mathcal{P}}

\def\pred{\textbf{PRED}\xspace}

\def\MCV{\textbf{MCVP}\xspace}

\def\P{{\fontfamily{cmss}\selectfont P}\xspace}

\def\NC{{\fontfamily{cmss}\selectfont NC}\xspace}

\usepackage[most]{tcolorbox}
\newtcolorbox{problemBox}[1][]{
  enhanced,
  breakable,
  colback=white,
  colbacktitle=white,
  coltitle=black,
  fonttitle=\bfseries,
  boxrule=.6pt,
  titlerule=.2pt,
  toptitle=3pt,
  bottomtitle=3pt,
  left=6pt,
  right=6pt,
#1}

\usepackage{cite}
\usepackage[hidelinks]{hyperref}

\newcommand{\neidir}[1]{\stackrel{#1}{\to}}

\begin{document}
\title{Non-Uniform and Weighted Crossing Gates in Two-Dimensional Sandpiles}
%
%
\author{Pablo Concha-Vega\inst{1}\orcidID{0009-0001-2419-1687} \and
  Antonin Loubière\inst{1, 2}\orcidID{0009-0006-5320-6776}   \and
Kévin Perrot\inst{1}}
\authorrunning{P. Concha-Vega et al.}
%
\institute{Aix Marseille Université, CNRS, LIS, Marseille, France \and
  École normale supérieure Paris-Saclay, Université Paris-Saclay, France}
\maketitle              
\begin{abstract}
  Determining whether predicting two-dimensional sandpiles lies in $\NC$
  or is \P-complete has been open for decades.
  Moore and Nilsson proved \P-completeness
  for the three dimensional case by encoding Boolean circuits
  into sandpiles, but this method fails
  in two dimension due to the impossibility of crossing gates.

  In this work, we study the existence of crossing gates on
  non-uniform and weighted grids.
  We establish an equivalence between uniform weighted crossing gates
  and a class of simple non-uniform crossing gates, which we call primal.
  We also exhibit a crossing gate that inherently requires more than one crossing,
  rather than a single crossing as in standard constructions.
  Finally, we show that the equivalence between uniform weighted and
  primal crossings breaks down in more general settings.

  \keywords{Sandpile models \and Discrete dynamical systems \and Computational complexity \and Crossing gates}
\end{abstract}

\section{Introduction}

The concept of self-organized criticality was introduced
in~\cite{bak1987self} with the sandpile model as its paradigmatic example.
Beyond its physical relevance, sandpiles have been extensively studied
under the lens of complexity theory as a type of cellular automaton
\cite{moore_computational_1999,MEJIA20113964,miltersen2007computational}.
A central question in this context is how hard it is to compute
the future global states of the model.
This is captured by the \emph{prediction problem}: given a distribution
of sand grains and a target cell, does the target cell eventually topple?
Early results by Moore and Nilsson~\cite{moore_computational_1999} show
a dichotomy: in one-dimensional sandpiles, the problem can be decided
efficiently in parallel (it belongs to $\text{\NC}^3$); while for three or more
dimensions, it becomes a \P-complete problem, leaving the two-dimensional
case open and remains open to this day.
The main issue is that \P-completeness proofs
(in this context) rely on embedding non-planar monotone Boolean circuits
on the sandpile grid, therefore two dimensions is intuitively not enough.
This was formalized by Gajardo and Goles~\cite{gajardo_crossing_2006} by
proving that no \emph{crossing gate} is possible on two-dimensional
sandpile (with the von Neumann or Moore neighborhoods). Crossing gates
simulate two signals crossing, enabling connections between other
circuit components. Crossing gates are therefore
crucial when proving \P-completeness in the context of two-dimensional
cellular automata \cite{banks1971information,goles2018complexity,concha2025sandpiles,concha2022complexity} and are typically
the hardest component to construct.
The latter motivated a line of research focusing on embedding
arbitrary Boolean circuits into two-dimensional sandpiles under
different settings
\cite{formenti2012computational,nguyen_any_2018,modanese2024embedding,concha2025sandpiles,concha2025timed}.
This work is intended to contribute to this line of research
with a particular focus on crossing gates.
\paragraph{Contribution.}
Our contribution consists in three main results.
First, we define the notions of primal and weighted grids.
In a $\prim$-primal grid, each cell has at most one out-neighbor from $\prim$.
In a weighted $\prim$-grid, the neighborhood is defined as a multiset
over $\prim$, therefore the distribution of grains to the neighbors
can differ from the uniform distribution, but is identical across all cells.
We prove that a $\prim$-primal grid admits a crossing gate
if and only if there exists a weighted $\prim$-grid that allows a crossing gate.
This result tackles one of the research directions proposed by Concha-Vega
et al.~\cite{concha2025sandpiles} related to \emph{non-uniform sandpiles}.

Secondly, we ask whether all crossing gates have exactly two edges
(of their toppling graphs) that cross, motivated by observation of
the known crossing gates in the literature. We give a negative
answer by showing a neighborhood that always needs more crossings edges.

Finally, we show that the equivalence fails for general decompositions (in 
non-primal and non-uniform grids), by giving counter-examples for both implications.

\paragraph{Outline}

This article is organized as follows.
In Section~\ref{sec:pre}, we give an overview of the necessary concepts and
previous results relevant to our work.
Our results are then presented in Sections~\ref{sec:primal_iff_weighted},~\ref{sec:imbricated},
and,~\ref{sec:general-decomposition} respectively.
Finally, we conclude in Section~\ref{sec:conclusions}.

\section{Preliminaries}\label{sec:pre}

\subsection{Sandpile automaton}

A two dimensional sandpile is a cellular automaton defined on a $\ell \times \ell'$
grid, where each cell contains a nonnegative number of sand grains.
We denote the set of cells by $V := \llbracket 1, \ell \rrbracket \times
\llbracket 1, \ell' \rrbracket$. A configuration is a function $c: V \rightarrow
\mathbb{N}$ that assign to each cell the number of grains it contains. When
a cell has too many grains, it \textit{topples} and sends some of its
grains to neighboring cells; this defines the dynamics of the system.
Each cell $u \in V$ is associated with a neighborhood
$\nei_u \subseteq \mathbb{Z}^2 \backslash \{(0, 0)\}$.
A cell is \textit{stable} if it contains less than $|\nei_u|$ grains;
otherwise, it is \textit{unstable} and topples, sending one grain to each
of its neighbors $v = u + \delta$ where $\delta \in \nei_u$.
We extend the notion of stability to configurations by saying that a
configuration is \textit{stable} if all its cells are stable.
Formally, applying one parallel step of this rule to a configuration
defines the following update function $T$:

\[ T(c)(u) = c(u) - |\nei_u| H(c(u) - |\nei_u|) + \sum_{v \in V}
\nei_v (u - v) H(c(v) - |\nei_{v}|) \]

\begin{figure}[t]
  \centering
  \begin{tikzpicture}
    \input{tikzifieur/neig-ex.tex}
  \end{tikzpicture}
  \caption{The von Neumann neighborhood (left) and the Moore
  neighborhood (right).}
  \label{fig:von-neumann-moore-nei}
\end{figure}

\begin{figure}[h!]
  \centering
  \scalebox{0.65}{
    \begin{tikzpicture}
      \input{tikzifieur/example.tex}
    \end{tikzpicture}
  }
  \caption{An example of stabilization in 3 steps with the von
  Neumann neighborhood.}
  \label{fig:stabilization-example}
\end{figure}

Where $H(x) :=
1$ if $x \geq 0$ and $0$ if $x<0$,
is the \emph{Heaviside step} function.
Most works consider that all cells have the same neighborhood;
we call such grids \textit{uniform}.
This neighborhood is typically assumed to span $\mathbb{Z}^2$. 
We allow $\nei_u$ to be a multiset i.e., a toppling cell may send multiple grains to the same neighbor.
Grids with multiset neighborhoods are called
\textit{weighted} (see Figure~\ref{fig:crossing-moore} left).

To simplify the study of uniform sandpiles, we introduce a sink
vertex denoted $s$, which never topples.
Whenever a cell has neighbors outside of the grid boundaries,
it sends the corresponding grains to $s$.
Likewise, any neighbors that would have sent grains from outside
the grid are redirected to $s$.
Under this convention, each cell has as many out-neighbors as
in-neighbors, so uniform grids can be assumed to be Eulerian.

Two uniform grids have been greatly studied in the literature
\cite{dhar_self-organized_1990,gajardo_crossing_2006,moore_computational_1999}: the
uniform grid with the von Neumann neighborhood, the four cardinal
adjacent cells, and with the Moore neighborhood, the eight adjacent cells
(Figure~\ref{fig:von-neumann-moore-nei}).

Dhar has shown that configurations on uniform grids with von Neumann
or Moore neighborhoods can always be stabilized by
applying $T$ a polynomial number of times,
and that a non-deterministic sequential application of topplings (one-by-one)
does not impact the resulting
stable configuration \cite{dhar_self-organized_1990}.
This result can be extended to any uniform grid: a configuration
on a uniform grid can always be stabilized in a polynomial number of topplings
($\#\text{grains} \cdot \ell^4$). Figure \ref{fig:stabilization-example}
presents an example of stabilization.

\subsection{Toppling prediction and crossings}

We focus on the decision problem of predicting
whether a given cell will topple during the stabilization of a
configuration as formalized below. As non-uniform grid are not always
stabilizable, and stabilization may not occur in polynomial time,
we restrict this problem to uniform grids.

\begin{problemBox}[title=$\nei$-Prediction Problem ($\nei$-\pred)]
  \paragraph{Input:} \parbox[t]{.85\textwidth}{
    A configuration $c$ on the uniform grid
    generated by $\nei$, and a target cell $y \in V$.
  }\\[.5em]
  \paragraph{Output:} Decide whether $y$ topples during the
  stabilization of $c$.
\end{problemBox}

Moore and Nilsson \cite{moore_computational_1999} proved that
\pred is \P-complete for
3-dimensional sandpiles with the von Neumann or Moore neighborhoods.
In contrast, for 1-dimensional sandpiles, the prediction problem
is in \NC (polylogarithmic time with a polynomial number of processors)
for the von Neumann neighborhood.

To prove the \P-completeness of 3-dimensional \pred, they use a
reduction from the \textsc{Monotone Circuit Value Problem}. The
circuit is embedded in a sandpile configuration using gadgets that implement,
the OR and AND operations. These gadgets are connected by wires,
which consist of sequences of stable cells holding a maximal number of grains.
A true signal is represented as the chain reaction of toppling along a wire:
when a cell topples, it destabilizes the next cell, thereby
propagating the signal.

\begin{problemBox}[title=Monotone Circuit Value Problem (\MCV)]
  \paragraph{Input:} \parbox[t]{.85\textwidth}{
    A Boolean circuit composed of input nodes, an output node,
    AND gates, and OR gates. A valuation $v$ of the input gates.
  }\\[.5em]
  \paragraph{Output:}
  Decide whether the circuit outputs true when evaluated under $v$.
\end{problemBox}

However, their construction does not apply for 2-dimensional sandpiles because the
non-planarity of the circuits is required for \P-completeness
\cite{ramachandran_efficient_1996}.
Moreover, Gajardo and Goles \cite{gajardo_crossing_2006} showed
that 2-dimensional uniform grids with von Neumann or Moore neighborhoods
cannot implement signal crossings using toppling chains
(and no other encoding of circuit is currently known).
Formally, they proved that no \emph{crossing gate} exists in
2-dimensional Moore or von Neumann grids.
Intuitively, a crossing gate is a gadget that allows toppling wires
to cross without interfering with each other.

\begin{definition}[Crossing gate \cite{gajardo_crossing_2006}]
  A \emph{crossing gate} can be from any direction, but, by symmetry and rotation,
  it suffices to consider the case where one wire runs from North to South
  and the other from West to East. For clarity, we will use theses directions for
  this definition.

  A \emph{crossing gate} is a stable configuration $c$ with the following property.
  If a grain is added to a designated cell on the North edge
  (excluding the corners), a designated cell on the South edge topples,
  but no cell on the East edge topples.
  In this case, we say that $c$ transports from North to South
  independently of the East.
  Similarly, $c$ must transport from West to East
  independently of the South.
\end{definition}

A more precise definition can be found in~\cite{concha2025sandpiles}.
When studying crossing gates, it is useful to consider the
\emph{toppling graphs} of the crossing gate, which represent the order
in which topplings occur justifying that this parallel definition will also capture sequential dynamics.
Figure~\ref{fig:toppling-graph-example}
presents an example of a simple crossing gate with its two toppling graphs.

\begin{definition}[Toppling graph \cite{gajardo_crossing_2006}]
  The \emph{North-toppling graph} (resp.~\emph{West-toppling graph}) of a
  crossing gate $c$ is the directed graph whose
  vertex set consists of all the cells that topple when a
  grain is added to the designated North (resp.~West) side cell.
  There is a directed edge from $u$ to $v$ if and only of
  $v$ is a neighbor of $u$ ($v - u \in \nei_u$)
  and $v$ topples after $u$.
\end{definition}

\begin{figure}[t]
  \centering
  \begin{tikzpicture}
    \input{tikzifieur/inter-simple-neig.tex}
  \end{tikzpicture}
  \scalebox{0.9}{
    \qquad
    \begin{tikzpicture}
      \input{tikzifieur/inter-simple-crossing.tex}
    \end{tikzpicture}
  }
  \caption{A neighborhood (left), a crossing gate (center) and its toppling graphs (right).}
  \label{fig:toppling-graph-example}
\end{figure}

One important fact that justifies the previous definition is that,
when a single grain is added to each cell along an edge,
no cell topples more that once.
Gajardo and Goles proved this fact for von Neumann and Moore neighborhoods,
but the result holds for all Eulerian grid.
To see why, consider the first cell to topple twice.
It cannot be an interior cell, because initially it is stable and
can receive at most $|\nei_u|$ grains if all its incoming
neighbors have toppled. The designated cell initially receive one additional
grain, but since it has an incoming edge from the sink $s$,
which never topples, it cannot topple twice either.

Gajardo and Goles~\cite{gajardo_crossing_2006} proved that
intersections between the toppling graphs are useless and
that crossing gates can always be constructed such that
the North-toppling and West-toppling graphs have disjoint vertex sets.
This result was later extended to Eulerian grids by Nguyen
and Perrot \cite{nguyen_any_2018}.

In the next sections, we will present our contribution about the link between
non-uniform and weighted crossing gates.

\section{Trading primal non-uniformity for weighted uniformity}
\label{sec:primal_iff_weighted}

In this section, we present our main result:
a primal non-uniform grid admits a crossing gate if and only if
there exists a weighting of these neighborhoods that
admits a crossing gate on a uniform grid.

\begin{definition}[$\prim$-primal non-uniform grid]
  Let $\prim \subseteq \mathbb{Z}^2$ be a finite set.
  A \emph{$\prim$-primal non-uniform grid} is
  a grid in which, for every cell $v$, the neighborhood $\nei_v$
  satisfies $|\nei_v| = 1$ and $\nei_v \subseteq \prim$.
  In other words, each cell has exactly one out-neighbor
  from $\prim$.
  The uniform grid with neighborhood $\prim$ is called
  $\prim$-grid.
\end{definition}

\paragraph{Remark.}
If $|\prim| = 1$, then the grid is uniform and the model is trivial.
Hence, we are mainly interested in the case $|\prim| > 1$, which we
simply call $\prim$-primal grids.

The existence of a crossing gate on a $\prim$-primal grid is closely
related to the planarity of the $\prim$-grid. Indeed, if there
is a $\prim$-primal grid that admits a crossing gate, as it transports
grains from the North to the South, and from the West to the East distinctly,
then the $\prim$-grid is necessarily non-planar.
Conversely, if a $\prim$-grid is non-planar,
we can use the directions that cross to create a crossing gate
on a $\prim$-primal grid.
The fact that each cell has only one neighbor will ensure that the configuration
transports grains independently.
The last notion we need in order to state our theorem
is that of a weighted neighborhood.

\begin{definition}[Weighted $\prim$-grid]
We define a \emph{weighted $\prim$-grid}
as a uniform grid with neighborhood $\nei$,
where $\nei$ is a multiset over $\prim$
(that is, elements of $\prim$ may appear multiple times),
and every element of $\prim$ appears at least once in
$\nei$, i.e., $\prim \subseteq \nei$.
\end{definition}

\begin{figure}[t]
  \centering
  \scalebox{0.7}{
    \tikzset{every picture/.style={line width=0.75pt}} 

\begin{tikzpicture}[x=0.75pt,y=0.75pt,yscale=-1,xscale=1]

  \draw [color=colorCrossWE  ,draw opacity=1 ]   (261.4,185.33) -- (302.35,213.45) ;
  \draw [shift={(304,214.58)}, rotate = 214.47] [color=colorCrossWE  ,draw opacity=1 ][line width=0.75]    (10.93,-3.29) .. controls (6.95,-1.4) and (3.31,-0.3) .. (0,0) .. controls (3.31,0.3) and (6.95,1.4) .. (10.93,3.29)   ;
  \draw [color=colorCrossNS  ,draw opacity=1 ] [dash pattern={on 4.5pt off 4.5pt}]  (297.5,184.58) -- (252.61,217.89) ;
  \draw [shift={(251,219.08)}, rotate = 323.43] [color=colorCrossNS  ,draw opacity=1 ][line width=0.75]    (10.93,-3.29) .. controls (6.95,-1.4) and (3.31,-0.3) .. (0,0) .. controls (3.31,0.3) and (6.95,1.4) .. (10.93,3.29)   ;
  \draw [color=colorCrossWE  ,draw opacity=1 ]   (167.8,247.08) .. controls (236.11,287.18) and (182.4,145.95) .. (260.4,184.85) ;
  \draw [shift={(262.8,186.08)}, rotate = 207.82] [fill=colorCrossWE  ,fill opacity=1 ][line width=0.08]  [draw opacity=0] (10.72,-5.15) -- (0,0) -- (10.72,5.15) -- (7.12,0) -- cycle    ;
  \draw [color=colorCrossWE  ,draw opacity=1 ]   (304.2,213.08) .. controls (372.51,272.98) and (379.07,144.2) .. (419.95,209.54) ;
  \draw [shift={(421.2,211.58)}, rotate = 239.04] [fill=colorCrossWE  ,fill opacity=1 ][line width=0.08]  [draw opacity=0] (10.72,-5.15) -- (0,0) -- (10.72,5.15) -- (7.12,0) -- cycle    ;
  \draw [color=colorCrossNS  ,draw opacity=1 ] [dash pattern={on 4.5pt off 4.5pt}]  (251,219.08) .. controls (277.1,268.83) and (330.37,220.09) .. (312.37,265.92) ;
  \draw [shift={(311.5,268.08)}, rotate = 292.5] [fill=colorCrossNS  ,fill opacity=1 ][line width=0.08]  [draw opacity=0] (10.72,-5.15) -- (0,0) -- (10.72,5.15) -- (7.12,0) -- cycle    ;
  \draw [color=colorCrossNS  ,draw opacity=1 ] [dash pattern={on 4.5pt off 4.5pt}]  (298.67,138.92) .. controls (335.17,125.92) and (418.17,244.42) .. (384.67,265.92) .. controls (351.5,287.2) and (345.29,120.31) .. (301.99,179.55) ;
  \draw [shift={(300.67,181.42)}, rotate = 304.6] [fill=colorCrossNS  ,fill opacity=1 ][line width=0.08]  [draw opacity=0] (10.72,-5.15) -- (0,0) -- (10.72,5.15) -- (7.12,0) -- cycle    ;

  \draw (139.17,229.17) node [anchor=north west][inner sep=0.75pt]   [align=left] {\textbf{{\LARGE W}}};
  \draw (418.5,215.83) node [anchor=north west][inner sep=0.75pt]   [align=left] {\textbf{{\LARGE E}}};
  \draw (300.67,269.17) node [anchor=north west][inner sep=0.75pt]   [align=left] {\textbf{{\LARGE S}}};
  \draw (291.5,101) node [anchor=north west][inner sep=0.75pt]   [align=left] {\textbf{{\LARGE N}}};
  \draw (191.5,185.83) node [anchor=north west][inner sep=0.75pt]   [align=left] {$\displaystyle a$};
  \draw (427.17,191.17) node [anchor=north west][inner sep=0.75pt]   [align=left] {$\displaystyle b$};
  \draw (350,137.67) node [anchor=north west][inner sep=0.75pt]   [align=left] {$\displaystyle c$};
  \draw (259.67,249.33) node [anchor=north west][inner sep=0.75pt]   [align=left] {$\displaystyle d$};
  \draw (380,273) node [anchor=north west][inner sep=0.75pt]  [color=colorCrossNS  ,opacity=1 ] [align=left] {$\displaystyle C_{1}$};
  \draw (184,265) node [anchor=north west][inner sep=0.75pt]  [color=colorCrossWE  ,opacity=1 ] [align=left] {$\displaystyle C_{2}$};
  \draw (276,211) node [anchor=north west][inner sep=0.75pt]  [color=colorCrossWE  ,opacity=1 ] [align=left] {$\displaystyle \vec{U}$};
  \draw (283,150) node [anchor=north west][inner sep=0.75pt]  [color=colorCrossNS  ,opacity=1 ] [align=left] {$\displaystyle \vec{V}$};

\end{tikzpicture}
    \tikzset{every picture/.style={line width=0.75pt}} 

\begin{tikzpicture}[x=0.75pt,y=0.75pt,yscale=-0.8,xscale=0.8]

  \draw [color=colorCrossWE  ,draw opacity=1 ]   (261.4,185.33) -- (302.35,213.45) ;
  \draw [shift={(304,214.58)}, rotate = 214.47] [color=colorCrossWE  ,draw opacity=1 ][line width=0.75]    (10.93,-3.29) .. controls (6.95,-1.4) and (3.31,-0.3) .. (0,0) .. controls (3.31,0.3) and (6.95,1.4) .. (10.93,3.29)   ;
  \draw [color=colorCrossNS  ,draw opacity=1 ] [dash pattern={on 4.5pt off 4.5pt}]  (297.5,184.58) -- (252.61,217.89) ;
  \draw [shift={(251,219.08)}, rotate = 323.43] [color=colorCrossNS  ,draw opacity=1 ][line width=0.75]    (10.93,-3.29) .. controls (6.95,-1.4) and (3.31,-0.3) .. (0,0) .. controls (3.31,0.3) and (6.95,1.4) .. (10.93,3.29)   ;
  \draw [color=colorCrossWE  ,draw opacity=1 ]   (123.8,217.08) .. controls (192.11,257.18) and (138.4,115.95) .. (216.4,154.85) ;
  \draw [shift={(218.8,156.08)}, rotate = 207.82] [fill=colorCrossWE  ,fill opacity=1 ][line width=0.08]  [draw opacity=0] (10.72,-5.15) -- (0,0) -- (10.72,5.15) -- (7.12,0) -- cycle    ;
  \draw [color=colorCrossWE  ,draw opacity=1 ]   (389.2,273.08) .. controls (457.51,332.98) and (464.07,204.2) .. (504.95,269.54) ;
  \draw [shift={(506.2,271.58)}, rotate = 239.04] [fill=colorCrossWE  ,fill opacity=1 ][line width=0.08]  [draw opacity=0] (10.72,-5.15) -- (0,0) -- (10.72,5.15) -- (7.12,0) -- cycle    ;
  \draw [color=colorCrossNS  ,draw opacity=1 ] [dash pattern={on 4.5pt off 4.5pt}]  (205,258.08) .. controls (231.1,307.83) and (284.37,259.09) .. (266.37,304.92) ;
  \draw [shift={(265.5,307.08)}, rotate = 292.5] [fill=colorCrossNS  ,fill opacity=1 ][line width=0.08]  [draw opacity=0] (10.72,-5.15) -- (0,0) -- (10.72,5.15) -- (7.12,0) -- cycle    ;
  \draw [color=colorCrossNS  ,draw opacity=1 ] [dash pattern={on 4.5pt off 4.5pt}]  (345.67,103.92) .. controls (382.17,90.92) and (465.17,209.42) .. (431.67,230.92) .. controls (398.5,252.2) and (392.29,85.31) .. (348.99,144.55) ;
  \draw [shift={(347.67,146.42)}, rotate = 304.6] [fill=colorCrossNS  ,fill opacity=1 ][line width=0.08]  [draw opacity=0] (10.72,-5.15) -- (0,0) -- (10.72,5.15) -- (7.12,0) -- cycle    ;
  \draw [color=colorCrossNS  ,draw opacity=1 ] [dash pattern={on 4.5pt off 4.5pt}]  (251,219.08) -- (206.11,252.39) ;
  \draw [shift={(204.5,253.58)}, rotate = 323.43] [color=colorCrossNS  ,draw opacity=1 ][line width=0.75]    (10.93,-3.29) .. controls (6.95,-1.4) and (3.31,-0.3) .. (0,0) .. controls (3.31,0.3) and (6.95,1.4) .. (10.93,3.29)   ;
  \draw [color=colorCrossNS  ,draw opacity=1 ] [dash pattern={on 4.5pt off 4.5pt}]  (344,150.08) -- (299.11,183.39) ;
  \draw [shift={(297.5,184.58)}, rotate = 323.43] [color=colorCrossNS  ,draw opacity=1 ][line width=0.75]    (10.93,-3.29) .. controls (6.95,-1.4) and (3.31,-0.3) .. (0,0) .. controls (3.31,0.3) and (6.95,1.4) .. (10.93,3.29)   ;
  \draw [color=colorCrossWE  ,draw opacity=1 ]   (304,214.58) -- (344.95,242.7) ;
  \draw [shift={(346.6,243.83)}, rotate = 214.47] [color=colorCrossWE  ,draw opacity=1 ][line width=0.75]    (10.93,-3.29) .. controls (6.95,-1.4) and (3.31,-0.3) .. (0,0) .. controls (3.31,0.3) and (6.95,1.4) .. (10.93,3.29)   ;
  \draw [color=colorCrossWE  ,draw opacity=1 ]   (346.6,243.83) -- (387.55,271.95) ;
  \draw [shift={(389.2,273.08)}, rotate = 214.47] [color=colorCrossWE  ,draw opacity=1 ][line width=0.75]    (10.93,-3.29) .. controls (6.95,-1.4) and (3.31,-0.3) .. (0,0) .. controls (3.31,0.3) and (6.95,1.4) .. (10.93,3.29)   ;
  \draw [color=colorCrossWE  ,draw opacity=1 ]   (218.8,156.08) -- (259.75,184.2) ;
  \draw [shift={(261.4,185.33)}, rotate = 214.47] [color=colorCrossWE  ,draw opacity=1 ][line width=0.75]    (10.93,-3.29) .. controls (6.95,-1.4) and (3.31,-0.3) .. (0,0) .. controls (3.31,0.3) and (6.95,1.4) .. (10.93,3.29)   ;

  \draw (95.17,199.17) node [anchor=north west][inner sep=0.75pt]   [align=left] {\textbf{{\LARGE W}}};
  \draw (503.5,275.83) node [anchor=north west][inner sep=0.75pt]   [align=left] {\textbf{{\LARGE E}}};
  \draw (254.67,310.17) node [anchor=north west][inner sep=0.75pt]   [align=left] {\textbf{{\LARGE S}}};
  \draw (320.5,84) node [anchor=north west][inner sep=0.75pt]   [align=left] {\textbf{{\LARGE N}}};
  \draw (147.5,155.83) node [anchor=north west][inner sep=0.75pt]   [align=left] {$\displaystyle a$};
  \draw (422.17,259.17) node [anchor=north west][inner sep=0.75pt]   [align=left] {$\displaystyle b$};
  \draw (417,110.67) node [anchor=north west][inner sep=0.75pt]   [align=left] {$\displaystyle c$};
  \draw (213.67,290.33) node [anchor=north west][inner sep=0.75pt]   [align=left] {$\displaystyle d$};
  \draw (452,178) node [anchor=north west][inner sep=0.75pt]  [color=colorCrossNS  ,opacity=1 ] [align=left] {$\displaystyle C_{1}$};
  \draw (140,235) node [anchor=north west][inner sep=0.75pt]  [color=colorCrossNS  ,opacity=1 ] [align=left] {$\displaystyle \textcolor[rgb]{0.82,0.01,0.11}{C}\textcolor[rgb]{0.82,0.01,0.11}{_{2}}$};
  \draw (342,253) node [anchor=north west][inner sep=0.75pt]  [color=colorCrossWE  ,opacity=1 ] [align=left] {$\displaystyle \vec{U}$};
  \draw (310,140) node [anchor=north west][inner sep=0.75pt]  [color=colorCrossNS  ,opacity=1 ] [align=left] {$\displaystyle \vec{V}$};

\end{tikzpicture}
  }
  \caption{A generic non-uniform crossing (left).
  The crossing after repeating the crossing directions (right).}
  \label{img:p2w-proof}
\end{figure}

\begin{theorem}\label{thm:primal_iff_weighted}
  Let $\prim \subseteq \mathbb{Z}^2$ be a finite set. The following statements are equivalent:
  \begin{enumerate}
    \item A $\prim$-primal grid admits a crossing gate.
    \item There exists a weighted $\prim$-grid that admits a crossing gate.
  \end{enumerate}
\end{theorem}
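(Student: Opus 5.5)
We prove the two implications separately. The direction (2)$\Rightarrow$(1) should follow by extracting two disjoint paths from the toppling graphs. The direction (1)$\Rightarrow$(2) should follow by turning a geometric crossing of two directions of $\prim$ into a weighted gadget in which those two directions carry heavy weight.

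\emph{(2)$\Rightarrow$(1).} Take a weighted $\prim$-grid with neighborhood $\nei$ and a crossing gate $c$ on it. By the result of Gajardo and Goles, extended to Eulerian grids by Nguyen and Perrot, we may assume that the North- and West-toppling graphs have disjoint vertex sets. Every cell in the North-toppling graph is reachable from the designated North cell by edges of the graph, since a cell topples only after one of its in-neighbours has toppled. Hence there is a simple directed path $P_N$ from the designated North cell to the designated South cell, and similarly a path $P_W$ from West to East. Every edge $u\to v$ of these paths satisfies $v-u\in\nei$, and therefore $v-u\in\prim$.

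We then define a $\prim$-primal grid. Each cell of $P_N$ or $P_W$ gets as unique out-neighbour the direction of its outgoing path edge. The last cell of each path gets some direction pointing to the sink, or simply its original direction. All other cells get an arbitrary direction in $\prim$. The configuration puts $0$ grains everywhere, which is stable because the threshold is $1$.

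Adding a grain to the designated North cell makes exactly the cells of $P_N$ topple in order: each path cell sends its single grain only to its successor, so no cell outside $P_N$ ever receives a grain. Thus the South cell topples. Since $P_N\cap P_W=\emptyset$, no cell of $P_W$, and in particular the East cell, topples. The symmetric argument applies from West to East. This direction looks routine; the only point to check is that the designated cells and the sink conventions are preserved.

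\emph{(1)$\Rightarrow$(2).} In a primal crossing gate, the toppling graphs are simply paths, since each cell has one out-neighbour. A North-to-South path and a West-to-East path, drawn with straight edges inside the rectangle, must intersect geometrically by the Jordan curve argument. As they are vertex-disjoint, some edge with direction $\vec V$ of the first path crosses some edge with direction $\vec U$ of the second, as in Figure~\ref{img:p2w-proof}. Therefore $\vec U,\vec V\in\prim$ are non-collinear and their segments cross.

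We will build a weighted $\prim$-grid whose neighborhood $\nei$ contains $\vec U$ and $\vec V$ with a large multiplicity $k$ and every other element of $\prim$ once. Following the right part of Figure~\ref{img:p2w-proof}, the wires are straight progressions $a+i\vec U$ and $c+j\vec V$ passing through the crossing. The gate may be oriented along these directions, since the definition is up to rotation and symmetry. Wire cells hold $|\nei|-k$ grains and all other cells hold $0$. A wire cell then topples exactly when its wire predecessor sends it $k$ grains. A non-wire cell receives at most $\sum_{\delta}\mathrm{mult}(\delta)\le$ a bounded amount per toppling neighbour, and we choose $k$ larger than the number of unit-weight directions times $|\prim|$ so that these stray grains never destabilize anything.

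\emph{Main obstacle.} The difficulty is interference between the two wires. A cell of the $\vec U$-wire also sends $k$ grains in direction $\vec V$, and these could land on a cell of the $\vec V$-wire and trigger it.

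My first attempt would be a genericity argument. The conflicts happen only when $a-c$ lies in finitely many translates of the lattice $\mathbb Z\vec U+\mathbb Z\vec V$ within the bounded window. I would try to shift $a-c$ slightly, keeping the two lines crossing, to avoid them.

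If $\mathbb Z\vec U+\mathbb Z\vec V=\mathbb Z^2$ makes every shift bad, I would fall back on one of two fixes:
\begin{itemize}
\item use unequal multiplicities and thresholds, so that a heavy $\vec V$-deposit from the wrong wire falls short of the $\vec U$-wire threshold;
\item reroute each wire away from the crossing through unit-weight detours, so that the only place where the wires are close is the single crossing pair of edges from the primal gate, and check that this one spot is conflict-free.
\end{itemize}
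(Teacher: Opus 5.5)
Your (2)$\Rightarrow$(1) direction is correct and is the paper's argument: vertex-disjoint toppling paths via Nguyen--Perrot, path cells pointing along the path, and the all-zero configuration.

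The gap is in (1)$\Rightarrow$(2). The step you call the main obstacle is the crux of the paper's proof, and you leave it open with a genericity heuristic and two unproven fallbacks. The first fallback would not help anyway: a $\vec U$-wire cell deposits exactly $k_V$ grains in direction $\vec V$, which is precisely what a $\vec V$-wire cell needs, whatever the weights.

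The missing idea is that the interference you fear cannot happen. Write the crossing point as $x=a+t\vec U=c+s\vec V$ with $t,s\in[0,1]$. Since $C_1$ and $C_2$ are vertex-disjoint, $x$ is not a common lattice vertex, so $(t,s)\notin\mathbb{Z}^2$. By linear independence:
\begin{itemize}
\item $a+i\vec U+\vec V=c+j\vec V$ forces $(i,j)=(t,s+1)$, which is impossible;
\item $a+i\vec U+\vec U=c+j\vec V$ forces $(i,j)=(t-1,s)$, which is impossible;
\item symmetrically, no $\vec U$- or $\vec V$-step leads from $c+\mathbb{Z}\vec V$ into $a+\mathbb{Z}\vec U$.
\end{itemize}
This is the paper's remark that edges between the repeated $U$-run and the repeated $V$-run cannot use $U$ or $V$. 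Equivalently, $a-c\notin\mathbb{Z}\vec U+\mathbb{Z}\vec V$, so $|\det(\vec U,\vec V)|\ge 2$, and your feared case $\mathbb{Z}\vec U+\mathbb{Z}\vec V=\mathbb{Z}^2$ never arises. Consequently each wire cell receives at most $|\prim|-2$ stray grains from the other wire, one per unit-weight direction. Non-wire cells, holding $0$, receive at most $k+|\prim|-2<|\nei|$. So any $k>|\prim|-2$ works.

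There is a second gap. Your claim that the gate ``may be oriented along these directions, since the definition is up to rotation and symmetry'' is false in general. The definition only lets you relabel the sides; each wire must still join opposite sides of an axis-parallel rectangle, away from corners. Take $\vec U=(5,1)$, $\vec V=(5,-1)$ in a $w\times h$ grid:
\begin{itemize}
\item a straight $\vec V$-line from North to South spans $5(h-1)$ columns;
\item a straight $\vec U$-line from West to East spans $(w-1)/5$ rows.
\end{itemize}
Both fit only if $w-1=5(h-1)$, and then the wires run corner to corner, which the definition excludes. This case really occurs. For $\prim=\{(5,1),(5,-1),(-1,0)\}$ a primal gate exists, yet this is the only pair of directions with $|\det|\ge 2$, hence the only pair whose segments can cross.

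The paper avoids this by keeping the outer portions of $C_1$ and $C_2$, which already reach the right sides. It only stretches the crossing edges into long $U$- and $V$-runs, so the rest of the two wires are far apart, and it explicitly re-glues when a crossing edge points the ``wrong way''. Your global straight wires would need the same rerouting. You would also need to check that the unit-weight parts of the wires stay out of each other's reach.
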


\begin{proof}
  We first prove that \textit{(2)$\implies$(1)} and then that \textit{(1)$\implies$(2)}.

  \paragraph{(2)$\implies$(1)} We assume that there is a crossing gate $c$ on
  a weighted $\prim$-grid. Therefore, there exists
  a path in the North-toppling graph that transport grains from the
  North to the South, and analogously, there is a path from West to
  East in the West-graph. As uniform grids are Eulerian graphs, we can
  assume that these paths are disjoint \cite{nguyen_any_2018}.
  Therefore, we can construct a crossing gate for a
  $\prim$-primal grid by taking the directions given by the
  path, and taking an arbitrary out-neighbor from $\prim$ for all the other cells
  (they remain stable).

  \paragraph{(1)$\implies$(2)} We assume that there is a crossing
  gate $c$ on a $\prim$-primal grid. If we make a naive conversion
  from the $\prim$-primal grid to a $\prim$-grid, we may trigger 
  undesired topplings. However, we will show that by weighting
  well-chosen directions, we can construct a crossing gate.

  As $c$ is a crossing gate, we can consider a toppling path $C_1$
  from the North to the South and a path $C_2$ from the West to the
  East. We can assume them disjoints. We know that these two paths must
  cross at some point. We consider $U,V \in \prim$ the directions 
  that cross (Figure~\ref{img:p2w-proof} left). 

  We proceed as follows. First, we want to remove the possible interactions
  between $a, b$ and $c, d$. We can repeat $U$ and $V$ to move $a,b$ and $c,d$ away such that no 
  in/out-neighbor of both $a$ and $b$ is in $c$ or $d$
  (Figure~\ref{img:p2w-proof} right).
  If this created self-crossings of wires because the initial layout was spiraling,
  we trivially remove any extra loop.

  Then, we will use the fact that at the crossing, the paths are ``regular''. We
  remark that, edges from the repeated path $U$ to the repeated path $V$ (and conversly) cannot be
  done by the direction $U$ nor $B$. Therefore, by putting a big enough weight on
  directions $U$ and $V$, we can ensure that the toppling of the North path will not trigger the West
  path, and vice-versa. Only the directions $U$ and $V$ may interact and topple cells which
  guarantee the independence of the new crossing gate on the
  weighted $\prim$-grid.

  Finally, if one of the crossing direction is going in the ``wrong direction'', e.g.
  an upward edge for the North to South wire, then we need to change $c$ and $d$ (or $a,b$)
  to glue it back to the sides (because of our definition of crossing). Because the repetition
  may move away $c$ from the North side, we may need to change the direction of the crossing
  (e.g from South to North) to create the crossing gate.
  \qed
\end{proof}

\begin{example}
  As the von Neumann neighborhood is planar, none of its weightings
  admit a crossing gate. However, the Moore neighborhood is non-planar and
  some of its weightings do admit a crossing gate, as shown in
  Figure~\ref{fig:crossing-moore}.
\end{example}

\begin{figure}[t]
  \centering
  \scalebox{0.9}{
  \begin{tikzpicture}
    \begin{scope}[shift={(-4.3cm,-1.01cm)}]
      \input{tikzifieur/inter-moore-neig.tex}
    \end{scope}  
    \input{tikzifieur/inter-moore-crossing.tex}
  \end{tikzpicture}
  }
  \caption{A weighted Moore neighborhood (left) and a crossing gate (center)
  with its toppling graphs (right).}
  \label{fig:crossing-moore}
\end{figure}

We see that the proof only considers crossing gates with a unique edge-crossing.
This is because, as we can choose the weighting, we can increase the
weights of the two crossing directions in order to make them independent of the
other neighbors (as used in the proof). We will see in the next
section that gates with more complicated crossing exist.

\section{Intertwined crossing gates}\label{sec:imbricated}

Consider the neighborhood $\nei_i = \{(1,1), (-1,-1), (0,-1), (1,-1), (0,-2)\}$
(Figure~\ref{fig:inter-nei-named}). It admits a crossing
gate (Figure~\ref{fig:inter-nei-cross}). However, contrary to the
previous section, the two wires cross
with multiple steps. We argue that for every possible crossing gate
on a uniform grid with neighborhood $\nei_i$, there will always be a
cell in the middle of the crossing. Formally, we say that two edges
cross if, by drawing segments between the cell center, the segments (including the center endpoints) intersect.
We consider the end-extremity to be part of the seggment but not the start-extremity
(to avoid counting the vertical edge twice).

For the proof of the next result, we use the following notation.
Let $a,b \in V$ be two cells, $\nei$ a neighborhood,
$d \in \nei$ a \emph{direction}, and $c$ a crossing gate for the
$\nei$-grid with toppling graphs $G_n$ and $G_w$.
We write $a \neidir{d} b$ to indicate that $a + d = b$ and that
$(a,b)$ is an edge of $G_n$ or $G_w$, depending on the context.

\begin{lemma}
  For every crossing gate on a $\nei_i$-grid,
  there is a path in the toppling graph from the North to
  the South that crosses the
  toppling graph from West to East at least twice.
\end{lemma}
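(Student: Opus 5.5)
The plan is to argue by contradiction, combining a geometric case analysis of which pairs of directions in $\nei_i$ can cross with a local grain-counting argument around the crossing. Fix a crossing gate $c$ on the $\nei_i$-grid with toppling graphs $G_n$ and $G_w$. By the result of Nguyen and Perrot~\cite{nguyen_any_2018}, already used in the proof of Theorem~\ref{thm:primal_iff_weighted}, we may assume that $V(G_n)\cap V(G_w)=\emptyset$. Take a North--South path $C_1$ in $G_n$ and a West--East path $C_2$ in $G_w$. By planarity of the rectangle these must cross at least once. Suppose, for contradiction, that some choice of $C_1$ crosses $G_w$ exactly once. Then the crossing is realised by a single pair of edges $e_1=(p,p+u)$ of $C_1$ and $e_2=(q,q+v)$ of $G_w$, with $u,v\in\nei_i$.

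The first step is the geometric enumeration of all pairs $\{u,v\}$ whose segments can intersect, under the paper's convention that a segment contains its end cell but not its start cell. There are two kinds of intersection.
\begin{itemize}
\item Any intersection involving the vertical edge $(0,-2)$ with an edge of a different direction must pass through the integer point $p+(0,-1)$, which is a cell centre. Every other direction has unit steps in both coordinates, so it can meet that vertical line only at lattice points. A crossing edge of the other wire must therefore end at, or pass through, the middle cell $p+(0,-1)$. That cell is an out-neighbour of $p$ via $(0,-1)$ and is reached by toppling, which should yield either a shared vertex (excluded) or a second crossing. I would check this carefully together with the endpoint conventions.
\item The other intersections are between the diagonals. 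The direction $(1,-1)$ crosses $(1,1)$ or $(-1,-1)$ at the centre of a unit square. Collinear overlaps of $(1,1)$ with $(-1,-1)$ would force shared cells, which are excluded.
\end{itemize}
So, up to swapping the roles of the two wires, the only single-edge crossing is on a $2\times 2$ block: $a=(x,y)$, $b=a+(1,1)$, $c'=(x,y+1)$, $d=(x+1,y)$. One wire uses $a\to b$ (or $b\to a$ via $(-1,-1)$), the other uses $c'\to d$ via $(1,-1)$.

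The second step is to show that this square configuration cannot occur in a crossing gate. Because $c'+(0,-1)=a$, the toppling of $c'$ always sends a grain into $a$, which belongs to the other graph. In addition, $a$ and $b$ are mutual neighbours via $(1,1)$ and $(-1,-1)$. I would count grains at $a$ in the same way as Gajardo and Goles:
\begin{itemize}
\item In its own run, $a$ must reach $5$ grains. Its in-neighbours are $a-\delta$ for $\delta\in\nei_i$, namely $(x-1,y-1)$, $b$, $c'$, $(x-1,y+1)$ and $(x,y+2)$.
\item $c'$ lies in the other graph, and $b$ topples after $a$. Hence $a$ is fed by at most the three remaining in-neighbours, which bounds $5-c(a)$ from below.
\item Running the other wire alone, $c'$ topples and sends its grain into $a$. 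Several of those same remaining in-neighbours are adjacent to $c'$ and $d$ in the other wire, through the vertical and $(0,-2)$ edges that a path through $c'$ must use to enter and leave the square.
\end{itemize}
I expect this to force $a$ to topple in the wrong run, or else force the wire through $c'$ to pass through a cell between the two wires, which is a second crossing.

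The main obstacle is this second step. Because $c'$ feeds $a$, a single extra grain need not topple $a$, so a bare parity argument is not enough. I would first try a case split on how $C_1$ (respectively $C_2$) enters $c'$ and leaves $d$. Since the only upward direction in $\nei_i$ is $(1,1)$, most predecessors of $c'$ are forced to lie at $c'+(0,1)$, $c'+(0,2)$, $c'+(-1,1)$ or $c'+(1,1)$. In each case I would then show that either some in-neighbour of $a$ is shared with the other wire, yielding an extra forced toppling, or the predecessor edge itself crosses the edge $a\to b$ or an edge adjacent to it. That would give the second crossing claimed in the lemma.
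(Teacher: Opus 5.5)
Your proposal is a plan rather than a proof, and both steps that carry the weight are deferred.

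In the first step you set aside crossings at the middle cell $m=p+(0,-1)$ of a $(0,-2)$ edge, saying this ``should yield either a shared vertex or a second crossing; I would check this''. These are genuine vertex-disjoint crossings. They arise from a diagonal ending at $m$, or from an interleaved $(0,-2)$ edge, which your restriction to edges ``of a different direction'' misses. They also appear throughout the paper's case analysis. What disposes of them is a grain count you never state:
\begin{itemize}
\item $m$ lies in the other graph and receives a grain from $p$ via $(0,-1)$ in the wrong run without toppling, so $c(m)\le 3$;
\item hence $m$ needs at least two in-arcs in its own graph;
\item each of these ends at $m$, so each crosses $p\to p+(0,-2)$.
\end{itemize}

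In the second step you do find the right local fact: $c'$ feeds $a$ through $(0,-1)$, which is the paper's ``$e$ feeds $h$''. But you use the count in the unhelpful direction and stop at ``I expect''. The usable consequence is the same count: $a$ stays stable in the other run, so $c(a)\le 3$, and therefore $a$ needs two in-arcs from its own wire. You then have to show that the extra in-arc forces a second crossing, and this needs a global ingredient your sketch lacks. The paper proceeds as follows:
\begin{itemize}
\item It takes $h$ to be the North-most cell of $G_n$ with a crossing in-arc.
\item It excludes the in-arc via $(1,1)$ using minimality of $h$ and the potential $x-y$ (in your coordinates), which never decreases along $\nei_i$.
\item It reduces the in-arc via $(0,-2)$ to the middle-cell count above.
\item In the remaining case, both diagonal in-arcs from $h+(-1,1)$ and $h+(1,1)$ are present. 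Then $e=h+(0,1)$ sits in the wedge they form, so the West wire through $e$ must cross both.
\end{itemize}
Also, your claim that $b$ topples after $a$ holds only for the orientation $a\to b$, not for $b\to a$.

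Second, your contradiction hypothesis is set up on the wrong object. You fix one North--South path $C_1$ that crosses $G_w$ once and look for a second crossing on $C_1$. In effect you are trying to prove that \emph{every} North--South path crosses twice. The local argument does not deliver that. In the wedge case the two crossing pairs involve the two different North arcs into $h$; it is the West wire through $e$ that meets $G_n$ twice. So a North--South path through only one of those arcs may cross $G_w$ just once. The paper's reductio is global: it assumes the whole gate has exactly one pair of crossing edges, and that is the hypothesis the counting and wedge arguments actually refute. Your plan would have to adopt it, or also track crossings on the West wire, in order to close.
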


\begin{figure}[t]
  \centering
  \begin{tikzpicture}
    \input{tikzifieur/neig-inter-named.tex}
    \matrix[table,right = of neig-inter-named-firing] {
      \node[] {a}; & \node[] {b}; & \node[] {c};\\
      \node[] {d}; & \node[] {e}; & \node[] {f};\\
      \node[] {g}; & \node[fill=blue!30] {h}; & \node[] {i};\\
      \node[] {j}; & \node[] {k}; & \node[] {l};\\
      \node[] {m}; & \node[] {n}; & \node[] {o};\\
    };
  \end{tikzpicture}
  \caption{Neighborhood $\nei_i$ with surrounding cells and direction names.}
  \label{fig:inter-nei-named}
\end{figure}
\vspace{-3mm}
\begin{figure}[t]
  \centering
  \scalebox{0.75}{
    \begin{tikzpicture}
      \input{tikzifieur/intrins-inter1.tex}
    \end{tikzpicture}
  }
  \caption{Neighborhood $\nei_i$ with an intertwined crossing gate.}
  \label{fig:inter-nei-cross}
\end{figure}

\begin{proof}
  We name the directions and cells as in Figure~\ref{fig:inter-nei-named}.
  We proceed by contradiction. Let us assume that there exists a
  crossing gate with exactly one pair of crossing edges.
  We assume that all cells in the North-toppling graph $G_n$ contribute
  to the crossing (co-accessible from the South and accessible from the North).
  We also assume that $G_n$ and $G_w$ have disjoint vertex sets \cite{nguyen_any_2018}.
  Let us consider a minimal $h$ (for the distance from the North
  side) from the vertex set of $G_n$ which has at least one incoming edge crossing $G_w$,
  i.e., all the predecessors of $h$ have no incoming crossing edge.

  \smallskip
  \textbf{Claim.} \textit{$e$ belongs to the vertex set of $G_w$.}
  \smallskip

  As $h$ topples, we do a case analysis on the incoming arc that
  crosses $G_w$. 
  
  \begin{itemize}
    \item If $j \neidir{u} h$, the only crossing directions are $e \neidir{dd} k$,
          and $g \neidir{r} k$.
          Due the shape of $\nei_i$, cells will always topple to the South-East
          i.e., $q(v) := v_x + v_y$ is non-decreasing (if the origin $(0, 0)$ is the 
          North-West corner).

          Given that $g$ is accessible from the West,
          there must exists a path of cells $p_i$,
          from the West to $g$. We have $q(p_i) \leq q(g) < q(j)$.
          As $j$ is accessible from the North,
          there is a path in $G_n$ which must intersect $p_i$ at
          some point, which contradicts the assumption on $h$.

    \item If $e \neidir{d} h$, there is no valid edge that could cross it.
    
    \item If $b \neidir{dd} h$, all crossing directions pass through $e$.
    
    \item If $d \neidir{r} h$, the crossings directions are $e \neidir{dd} k$,
          $g \neidir{u} e$, and $e \neidir{l} g$, in all cases
          the claim on $e$ holds.

    \item If $f \neidir{l} h$, the crossings directions are $e \neidir{dd} k$ and
          $e \neidir{r} i$, in both cases the claim on $e$ holds.
  \end{itemize}

  We conclude that the claim always holds.
  Because the two toppling graphs are disjoint,
  $h$ must have two incoming arcs in $G_n$ because it
  can receive one grain from $e$. We now perform a case analysis
  on the incoming arcs of $h$.
  
  \begin{itemize}
    \item Directions $j \neidir{u} h$, $e \neidir{d} h$ are still not possible
          for the same reasons (minimality of $h$ and no valid crossing).

    \item If $b \neidir{dd} h$, then $e$ must have two incoming edges because $b \neidir{d} e$.
          Therefore, there are two incoming edges that intersect $b \neidir{dd} h$.

    \item Otherwise, $d \neidir{r} h$ and $f \neidir{l} h$, therefore, by arguments
          similar to the $j \neidir{u} h$ case, and because $e$ is accessible and
          contributes to the crossing, it must cross $G_n$ twice: $d \neidir{r} h$, then 
          $f \neidir{l} h$ (by hypothesis, it cannot cross predecessors on $h$).

  \end{itemize}\qed
\end{proof}

\section{Counter-examples to general decomposition}\label{sec:general-decomposition}

As shown in Section~\ref{sec:primal_iff_weighted}, non-uniformity and weighting
can be traded. However, as demonstrated in Section~\ref{sec:imbricated},
this equivalence only captures simple crossings.
We attempted to generalize the notion of primal-grid into a broader decomposition.
This exploration led to the identification of counter-examples demonstrating that
crossing gates are not easily composable.
In this section we exhibit those counter-examples.

To generalize the previous definition of primal grids, we consider decompositions
of arbitrary shape rather than restricting to primal directions.
Let $S$ be a set of (possibly weighted) neighborhoods, we are interested in the relation between non-uniform grid composed of neighborhoods of $S$,
and uniform grids whose neighborhoods are multisets on $S$.
Our results indicate that the existence of crossing gates
in one setting does not imply their existence in the other,
revealing fundamental differences between these grid constructions

Our techniques are applicable only to grids satisfying the following property:
adding one on one edge, each cell topples at most once during stabilization.
This property does not hold in general for non-uniform grids.
To circumvent this limitation, in this section we consider a variant of the
sandpile dynamics in which each cell is allowed to topple at most once.

\begin{figure}[t]
  \centering
  \scalebox{0.9}{
    \begin{tikzpicture}
      \input{tikzifieur/neig-nonuni-wei.tex}
    \end{tikzpicture}
  }
  \caption{Examples of neighborhoods admitting a non-uniform crossing gate but 
           whose weighted grids does not}
  \label{fig:nonuni-to-wei}
\end{figure}

\begin{example}[Figure~\ref{fig:nonuni-to-wei}]
  \label{ex:non-uni-a}
  Consider $S = \{\nei_{a1}, \nei_{a2}\}$. There is a crossing gate on a
  non-uniform grid composed of $\nei_{a1}$ and $\nei_{a2}$: one can use $\nei_{a1}$
  to offset the y-coordinate parity, and then use $\nei_{a2}$ to cross without
  interference.

  However, a $\{\nei_{a1}, \nei_{a2}\}$ uniform grid (a grid
  whoose neighborhood is a linear combination of $\nei_{a1}$ and $\nei_{a2}$)
  does not admit a crossing gate \cite{concha2025sandpiles}.
  A uniform grid with weighted neighborhood $\{\nei_{a1}, \nei_{a2}, \nei_{a2}\}$
  does admit a crossing gate, but it is not the case in general.
\end{example}

\begin{example}[Figure~\ref{fig:nonuni-to-wei}]
  Consider $S = \{\nei_{b1}, \nei_{b2}\}$, a non-uniform $S$ grid admits a crossing,
  by a path of $\nei_{b1}$ crossing a path of $\nei_{b2}$.
  However, any uniform grid with neighborhood $\nei_{b1}^p \nei_{b2}^q$ 
  ($p$ copies of $\nei_{b1}$ and $q$ copies of $\nei_{b2}$)
  does not admit a crossing gate.

  Arguing by contradiction, suppose that such a crossing gate exists.
  As observed previously, the corresponding toppling graphs must cross
  at some point, and only the diagonal directions can cross.
  If $p \neq q$, then independence of the toppling graphs is not verified
  because the toppling process on the heavier side necessarily triggers
  the toppling of the lighter one.
  On the other hand, Concha-Vega et al.~\cite{concha2025sandpiles}
  proved that the case $p = q$ does not admit a crossing gate.
\end{example}

\begin{figure}[t]
  \centering
  \scalebox{0.9}{
    \begin{tikzpicture}
      \input{tikzifieur/neig-wei-nonuni.tex}
    \end{tikzpicture}
  }
  \caption{Examples of neighborhoods admitting a weighted crossing gate but 
           whose non-uniform grids does not}
  \label{fig:wei-to-nonui}
\end{figure}

\begin{example}[Figure~\ref{fig:wei-to-nonui}]
  The uniform grid with neighborhood $\{\nei_{c1}, \nei_{c2}\}$
  admits a crossing gate (for example a diagonal cross).
  On the other hand, any $\{\nei_{c1}, \nei_{c2}\}$ non-uniform grid
  does not admit a crossing gate.

  Arguing by contradiction, suppose that such a crossing gate exists.
  We assume again that the two topplings are disjoint.
  The only possible crossings are between the diagonal direction $\nei_{c2}$
  and the 2-weighted direction in $\nei_{c1}$.
  Let us call $j$ the out-neighbor according to diagonal direction $\nei_{c2}$
  in this crossing. Then the direction $\nei_{c2}$ sends
  one grain to cell $j$: this is an arc in one of the two topping graphs,
  and $j$ has no other in-neighbor.
  However, cell $j$ also receives one grain from the cell above,
  and this cell is in the other toppling graph, therefore contradicting
  disjointness because it topples $j$.
\end{example}

\section{Conclusion and perspectives}\label{sec:conclusions}

To conclude, in section~\ref{sec:primal_iff_weighted},
we established a characterization of weighted uniform crossing gates
in terms of a simpler class of structures, which we call primal crossings.
However, this characterization does not capture more complex crossings, like the
intertwined crossings of section~\ref{sec:imbricated}.
Finally, in Section~\ref{sec:general-decomposition}, we show that the above
equivalence does not extend to more general settings,
where the decomposition into simpler but not primal structures fails.
We emphasize that Theorem~\ref{thm:primal_iff_weighted} facilitates
$\P$-completeness proofs, as the construction of the remaining components
for reductions from \textbf{CVP} is typically straightforward.

While crossing gates are a central ingredient in proving \P-completeness
for cellular automata, their absence does not rule out \P-completeness,
since there may exist other ways of encoding signals than just creating
chains of topplings or reductions may exist beyond chain-like toppling structures.

Moreover, even when the equivalence preserves the existence of crossing gate,
it does not preserves the computational complexity:
on primal grids, grain movement is highly restricted
allowing efficient \NC algorithms for the prediction problem.
In contrast, weighted grids appear to support significantly richer dynamics
and are natural candidates for \P-completeness.

We leave open what we consider our Holy Grail: the complexity of prediction
in 2-dimensional uniform sandpiles with von Neumann or Moore neighborhoods.
More broadly, this work raises structural questions about crossing gates:
in particular, it is unclear whether there exist families of grids requiring
arbitrarily large numbers of crossing edges, or whether deciding the
existence of a crossing gate is even decidable in general.

\begin{credits}
  \subsubsection{\ackname}
  The authors thank projects
  ANR-24-CE48-7504 ALARICE and
  HORIZON-MSCA-2022-SE-01 101131549 ACANCOS.
  \subsubsection{\discintname}
  The authors have no competing interests to declare.
\end{credits}
%
%
%
\bibliographystyle{splncs04}
\bibliography{biblio.bib}

\end{document}